\DeclareMathOperator*{\argmin}{argmin}
\DeclareMathOperator{\diag}{diag}
\theoremstyle{plain}
\newtheorem{theorem}{Theorem}
\newtheorem{lemma}[theorem]{Lemma}
\begin{document}
%
\title{Fronthaul Compression for Uplink Massive MIMO using Matrix Decomposition}
\vspace{0.05cm}
\author{\IEEEauthorblockN{Aswathylakshmi P and Radha Krishna Ganti\\}
    \IEEEauthorblockA{Department of Electrical Engineering\\
                      Indian Institute of Technology Madras\\
                      Chennai, India 600036\\
                      \{aswathylakshmi, rganti\}@ee.iitm.ac.in}
}



%


\maketitle

\begin{abstract}
Massive MIMO opens up attractive possibilities for next generation wireless systems with its large number of antennas offering spatial diversity and multiplexing gain. However, the fronthaul link that connects a massive MIMO Remote Radio Head (RRH) and carries IQ samples to the Baseband Unit (BBU) of the base station can throttle the network capacity/speed if appropriate data compression techniques are not applied. In this paper, we propose an iterative technique for fronthaul load reduction in the uplink for massive MIMO systems that utilizes the convolution structure of the received signals. We use an alternating minimisation algorithm for blind deconvolution of the received data matrix that provides compression ratios of 30-50. In addition, the technique presented here can be used for blind decoding of OFDM signals in massive MIMO systems. 
\end{abstract}

\begin{IEEEkeywords}
Massive MIMO, 5G networks, fronthaul compression, iterative technique, alternating minimisation, blind matrix deconvolution
\end{IEEEkeywords}

%
\IEEEpeerreviewmaketitle
\section{Introduction}
Massive MIMO is a pivotal technology in the 5G wireless standards \cite{larsson2017massive} and hopefully 6G standards. While it can dramatically increase the network capacity and speed with spatial diversity and multiplexing gain, it also comes with the practical difficulties of handling and processing large amounts of data by the hardware and the fronthaul network. With 5G NR standards envisioning base stations with antennas numbering in the range of 30-100 \cite{larsson2017massive}, one significant bottleneck is the fronthaul, which handles the data transfer between the remote radio head (RRH) and the baseband unit (BBU) at the radio base station. The latest Open Radio Access Network (O-RAN) standard used for fronthaul data transfer is the Enhanced CPRI (eCPRI), which allows for various functional splits between the BBU and RRH to flexibly reduce fronthaul load \cite{ecpri2017}. However, a low-PHY functional split requires a data rate upto 236 Gbps for 100MHz bandwidth for a 64-antenna base station \cite{ecpri2017}. Since the fronthaul load scales up with the number of RRH antennas, laying high speed optical fibres for increasing antenna elements will increase the CAPEX for network operators significantly and hence can limit the use of massive MIMO solutions.

\par{Numerous compression techniques have been proposed to reduce the fronthaul load for the uplink, which exploit either the redundancies in the received waveform \cite{guo2013lte}, \cite{drvenica2016}, a priori knowledge of the characteristics of the received signal \cite{Peng2016} or the correlation between the base station antennas \cite{Choi2016}, \cite{aswathylakshmi2019qr}. \cite{guo2013lte} achieves 1/2-rate compression by removing redundant spectrum bandwidth and compressing the quantization bit-width. \cite{drvenica2016} outlines a lossy compression algorithm by applying FFT and Discrete Cosine Transform (DCT) to the received signals and discarding low power frequency coefficients. While \cite{guo2013lte}, \cite{drvenica2016} have the advantage of low implementation complexity, they do not offer high compression ratios. \cite{Peng2016} reviews compressive sensing techniques that make
use of the sparsity of signals for uplink fronthaul compression. \cite{Choi2016} performs a principal component analysis (PCA) on the matrix of received signals to utilize the spatio-temporal correlation of the signals for their compression. \cite{aswathylakshmi2019qr} uses a low-rank approximation of the received signal matrix via QR decomposition to reduce the fronthaul load. Although \cite{Choi2016}, \cite{aswathylakshmi2019qr} achieve better compression than \cite{guo2013lte}, \cite{drvenica2016} at the cost of increased computations, the highest compression ratio they achieve for 64 antennas is 5. We propose a new method that offers a compression ratio nearly an order of magnitude higher than \cite{Choi2016}, \cite{aswathylakshmi2019qr} by leveraging the underlying convolution structure of the received data.}

\begin{figure*}[h!]
    \centering
    \captionsetup{justification=centering}
    \includegraphics[scale=0.4]{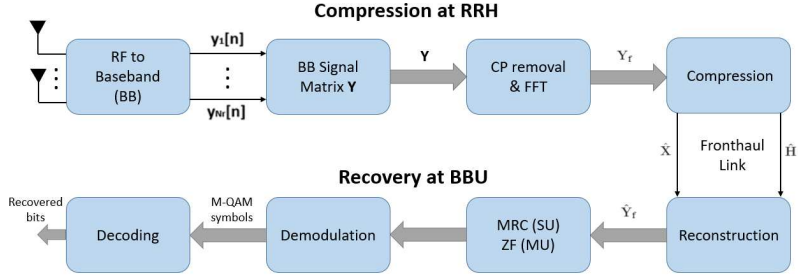}
    \caption{Proposed Fronthaul Compression Scheme}
    \label{fig:my_label}
\end{figure*}

\par{The goal of fronthaul compression is to compress $N$ IQ samples received at the $N_r$ RRH antennas, which can be arranged in the form of an $N\times N_r$ matrix. For example, $N$ can be the FFT size used in OFDM signals. In a massive MIMO system, $N_r$ is usually much higher than the number of users served in the system simultaneously. This means that the columns of the received signal matrix at RRH are the result of the same user data sequence convolved with a different multi-path channel response corresponding to each receive antenna in the time domain. In the frequency domain, this received signal matrix can be expressed as the product of an $N$-dimensional diagonal matrix of user data and the Fourier transform of an $L\times N_r$ channel matrix, where $L$ is the number of significant multi-paths. If $L \ll N_r$, as is typical for massive MIMO, representing the received signal matrix in terms of the $N$-length user data sequence and the $L\times N_r$ channel matrix allows us to recover it at the BBU using far fewer samples than $NN_r$. Towards this end, we need to perform a blind deconvolution of the received signal matrix at the RRH. Our objective is similar to the one in \cite{li2019rapid}, however we take an approach different from \cite{li2019rapid} to solve the objective function since we have to deconvolve a matrix of sequences rather than a single sequence as in \cite{li2019rapid}. Noting that the $N$-point Fourier transform of the $L\times N_r$ channel matrix is low rank, given $L< \min \{N,N_r\}$, we present an iterative algorithm consisting of alternating minimisation similar to the one in \cite{jain2013low} used for low rank matrix sensing. However, our algorithm is constructed differently from \cite{jain2013low} to reflect the fact that only one of the deconvolved matrices (the channel matrix) is low rank while the other (the user data matrix) is a full rank diagonal matrix.}

\par{This paper is organised as follows: Section II outlines our system model, Section III presents our compression algorithm for single user and multi-user cases, Section IV presents the results of link level simulations of the algorithm and analyses its performance vis-\'a-vis the PCA compression in \cite{Choi2016} and the uncompressed system, and Section V concludes the work.}

\section{System Model}

We consider a massive MIMO 5G base station RRH that has $N_r$ antennas and receives signals from $N_u$ users in the uplink. Since the uplink multiple access scheme in 5G (and 4G) is OFDMA \cite{38101}, the bit-stream from each user is mapped to an M-QAM symbol constellation followed by sub-carrier mapping, IFFT, and cyclic prefix (CP) addition. We assume that the channel has a maximum of $L$ significant multi-paths for each user. The signal received at antenna $r$ at a sampling instant $n$ is
\begin{equation*}
y_{r}[n] = \sum_{u=1}^{N_u} x_{u}[n] \circledast h_{r,u}[n] + w_{r}[n] ,  
\end{equation*}
where $x_{u}[n]$ is the OFDM symbol of user $u$, $h_{r,u}[n]$ is the multi-path channel response between user $u$ and antenna $r$, $\circledast$ represents circular convolution (because of cyclic prefix in OFDM), and $w_r$ is the additive white Gaussian noise (AWGN) at antenna $r$ with variance $\sigma^2$. 
\par{We consider a block of $N$ samples in the time-domain received at the RRH after CP removal. We assume a constant channel for the duration of these $N$ samples. Without loss of generality, we choose $N$ as the FFT size of one OFDM symbol. The signal matrix received at the RRH is
\begin{equation*}
\mathbf{Y}=
  \begin{bmatrix}
   y_{1}[1] & y_{2}[1] & . & . & . & y_{N_r}[1] \\
    y_{1}[2] & y_{2}[2] & . & . & . & y_{N_r}[2]\\
    . & . & .  &   &   &. \\
    . & . &   &  . &   &. \\
    . & . &   &   & .  &. \\
    y_{1}[N] & y_{2}[N] & . & . & . & y_{N_r}[N]\\
  \end{bmatrix}_{N\times N_r} .
\end{equation*}
Here, each column of $\mathbf{Y}$ represents the received signal at each antenna over $N$ sampling instants. We need to send information about $\mathbf{Y}$ from the RRH to the BBU via the fronthaul link for its faithful reconstruction at the BBU.}

\section{Compression using Matrix Decomposition}
For ease of analysis, we first focus on the scenario where all the $N$ sub-carriers in the uplink are occupied by a single user and describe how to compress $\mathbf{Y}$. We then extend this result to the case where the $N$ sub-carriers are occupied by multiple users simultaneously. Fig.1 gives the outline of the process of compression at RRH and recovery at BBU.

\subsection{SU-MIMO System}
When the number of users, $N_u$ = 1, we have a single-user multiple-input multiple-output (SU-MIMO) system in the  uplink. Applying FFT to $\mathbf{Y}$ and using the subscript $f$ to denote quantities in the frequency domain, we can write $\mathbf{Y}$ in the frequency domain as
\begin{equation}
    \mathbf{Y_f}= \mathbf{X_f}\mathbf{H_f} + \mathbf{W_f},
\end{equation}
where $\mathbf{X_f}$ is the $N\times N$ diagonal matrix with the $N$-length M-QAM user data as its diagonal, $\mathbf{H_f}$ is the $N\times N_r$ multi-path channel matrix, and $\mathbf{W_f}$ is the noise. Denoting the first $L$ columns of the $N\times N$ DFT matrix as $\mathbf{F_L}$, $\mathbf{H_f}$ can be expressed as the matrix product $\mathbf{{F_{L}H_t}}$, where $\mathbf{H_t}$ is the $L\times N_r$ time-domain multi-path channel response for the $N_r$ antennas. Using this, we rewrite (1) as 
\begin{equation}
    \mathbf{Y_f}= \mathbf{X_f}\mathbf{{F_{L}H_t}} + \mathbf{W_f}.
\end{equation}

\par{If $\mathbf{Y_f}$ is transmitted on the fronthaul, then $N N_{r}$ samples are required to be sent. However, if we observe (2), we see that we can reconstruct the signal component of $\mathbf{Y_f}$ with the $N$ non-zero samples corresponding to $\mathbf{X_f}$ and the $LN_r$ samples corresponding to $\mathbf{H_t}$. Therefore, if we decompose $\mathbf{Y_f}$ into its signal components $\mathbf{X_f}$ and $\mathbf{H_t}$,  we need to send only $N + LN_r$ samples from the RRH to the BBU, to recover $\mathbf{Y_f}$. Typically, $L \ll \min \{N_r, N\}$, therefore this gives a compression ratio (CR) of}
\begin{equation}
    CR_{SU} = \frac{N N_{r}}{N+LN_r}.
\end{equation}

\par{One way of obtaining $\mathbf{X_f}$ and $\mathbf{H_t}$ from $\mathbf{Y_f}$ is coherent demodulation of $\mathbf{Y_f}$. This would require pilots and modulation order to be known at the RRH \cite{zaidi20185g}. However, in general, this information is not available at the RRH \cite{alliance2020ran}. The idea is to decompose $\mathbf{Y_f}$ into matrices $\mathbf{\hat{X}}$ and $\mathbf{\hat{H}}$ (not necessarily equal to $\mathbf{X_f}$ and $\mathbf{H_t}$) so that $\mathbf{Y_f} \approx \mathbf{\hat{X}F_{L}\hat{H}}$, without any knowledge of pilots/modulation order. This clearly resembles the blind demodulation of massive MIMO OFDM symbols and the technique given in this paper can also be used for the same. Thus, for compressing $\mathbf{Y_f}$, we need to find a diagonal matrix $\hat{\mathbf{X}}$ and an $L\times N_r$ matrix $\hat{\mathbf{H}}$ that minimise
$|| \mathbf{Y_f} - \hat{\mathbf{X}}\mathbf{{F_{L}\hat{H}}} ||^{2}_{F}$. We use an iterative alternating minimisation approach to find $\hat{\mathbf{X}}$ and $\hat{\mathbf{H}}$ as outlined in Algorithm 1. At each iteration of the algorithm, we solve the minimisation problems
\begin{equation}
\begin{aligned}
    \mathbf{\hat{H}_{k+1}} &= \argmin_{\mathbf{H}} || \mathbf{Y_f} - \mathbf{\hat{X}_k}\mathbf{{F_{L}{H}}} ||^{2}_{F}, \\
    \mathbf{\hat{X}_{k+1}} &= \argmin_{\mathbf{X}} || \mathbf{Y_f} - \mathbf{X F_{L}\hat{H}_{k+1}} ||^{2}_{F},
\end{aligned}
\end{equation}
alternately until we obtain the optimal solution to faithfully reconstruct $\mathbf{Y_f}$ within a pre-defined error tolerance $\epsilon$. The above optimisation problem has closed form solution as shown in Algorithm 1.  

\par{We begin with an initial guess, $\mathbf{\hat{X}_0}$, and solve for $\mathbf{\hat{H}}$ in $\mathbf{Y_f} =  \mathbf{\hat{X}_0}\mathbf{{F_{L}{\hat{H}}}}$,
as given by step 5 of Algorithm 1.
We then use this $\mathbf{\hat{H}}$ to find $\hat{\mathbf{X}}$ such that $\sum_{r=1}^{N_r} || \mathbf{y_r} - \mathbf{\hat{X}F_{L}h_r} ||^{2}_{2}$ is minimised, where $\mathbf{y_r}$ and $\mathbf{\hat{h}_r}$ are the $r$-th columns of $\mathbf{Y_f}$ and $\mathbf{\hat{H}}$, respectively. This leads us to the solution given in step 8 of Algorithm 1. We repeat this process until the product $\mathbf{\hat{X}F_{L}\hat{H}}$ meets the error tolerance $\epsilon$. We then send only $\mathbf{\hat{X}}$ and $\mathbf{\hat{H}}$ to the BBU.}


\renewcommand\footnoterule{}      
\begin{algorithm}[b!] 
\caption{Alternating minimisation for SU-MIMO fronthaul compression}\footnotetext{$\dagger$ denotes matrix pseudo-inverse, $*$ denotes complex conjugation, $\mathbf{A_{m,r}}$ denotes element at row $m$ and column $r$ of matrix $\mathbf{A}$ and $x_{i}(m)$ denotes $m^{th}$ diagonal element of $\mathbf{\hat{X}_i}$.}
\label{alg:loop}
\begin{algorithmic}[1]
\State Input $\mathbf{Y_f}$
\State Define error tolerance $\epsilon$
\State Initialize diagonal of ${\mathbf{\hat{X}_0}}$ to be the first column of $\mathbf{Y_f}$, $k = 0$, $\mathbf{B} = [\hspace{1mm}]$
\Repeat
 \State {${\mathbf{\hat{H}_{k+1}}}\gets \mathbf{{(\hat{X}_{k}F_L)}^{\dagger}}\mathbf{Y_f}$}
 \State $\mathbf{B} \gets \mathbf{F_{L}\hat{H}_{k+1}}$
 \For {$m \gets 1$ to $N$}
    \State $x_{k+1}(m) = \Big(\sum_{r=1}^{N_r} \mathbf{Y_{m,r}B^{*}_{m,r}}\Big)/\Big(\sum_{r=1}^{N_r} |\mathbf{B_{m,r}}|^2\Big)$
 \EndFor
 \State $k \gets k+1$
\Until{$|| \mathbf{Y_f} - {\mathbf{\hat{X}_k}}\mathbf{{F_{L}\hat{H}_k}} ||_{F} / ||\mathbf{Y_f}||_{F} < \epsilon$}\\
\Return $\mathbf{\hat{X}_k}, \mathbf{\hat{H}_k}$
\end{algorithmic}
\end{algorithm}
}


\subsection{MU-MIMO System}
When the number of users occupying the same set of sub-carriers in the system, $N_{u} > 1$, we have a multi-user multiple-input multiple-output (MU-MIMO) system. For such a system, we can express $\mathbf{Y}$ in the frequency domain as
\begin{equation}
    \mathbf{Y_f}= [\mathbf{X_{f}(1)} \hspace{1mm}\mathbf{X_{f}(2)} \hspace{1mm}... \hspace{1mm}\mathbf{X_{f}(N_u)}] 
    \begin{bmatrix}
        \mathbf{H_{f}(1)} \\
        \mathbf{H_{f}(2)} \\
        \vdots \\
        \mathbf{H_{f}(N_u)} \\
    \end{bmatrix} + \mathbf{W_f},
\end{equation}
where $\mathbf{X_{f}(u)}$ is the $N \times N$ diagonal matrix, with the $N$-length M-QAM data of user $u$ as its diagonal, and $\mathbf{H_{f}(u)}$ is the $N\times N_r$ multi-path channel matrix for user $u$. Similar to the SU-MIMO case, we can express each $\mathbf{H_{f}(u)}$ as the matrix product $\mathbf{{F_{L}H_{t}(u)}}$, where $\mathbf{H_{t}(u)}$ is the $L\times N_r$ time-domain multi-path channel response for user $u$. Let $\diag(\mathbf{F_L})$ denote the block matrix of the form
\begin{equation*}
\diag(\mathbf{F_L})=
  \begin{bmatrix}
   \mathbf{F_{L}} & 0 & . & . & . & 0 \\
    0 & \mathbf{F_{L}} & . & . & . & 0\\
    . & . & .  &   &   &. \\
    . & . &   &  . &   &. \\
    . & . &   &   & .  &. \\
    0 & 0 & . & . & . & \mathbf{F_{L}}\\
  \end{bmatrix}_{N N_{u}\times L N_{u}} .
\end{equation*}
Then,
\begin{equation*}
    \begin{bmatrix}
        \mathbf{H_{f}(1)} \\
        \mathbf{H_{f}(2)} \\
        \vdots \\
        \mathbf{H_{f}(N_u)} \\
    \end{bmatrix}_{N N_{u} \times N_r} = \diag(\mathbf{F_L})\begin{bmatrix}
        \mathbf{H_{t}(1)} \\
        \mathbf{H_{t}(2)} \\
        \vdots \\
        \mathbf{H_{t}(N_u)} \\
    \end{bmatrix}_{L N_{u} \times N_r}.
\end{equation*}
Using the above, we can rewrite (5) as
\begin{equation}
    \mathbf{Y_f}= [\mathbf{X_{f}(1)} \hspace{1mm}\mathbf{X_{f}(2)} \hspace{1mm}... \hspace{1mm}\mathbf{X_{f}(N_u)}] \diag(\mathbf{F_L})
    \begin{bmatrix}
        \mathbf{H_{t}(1)} \\
        \mathbf{H_{t}(2)} \\
        \vdots \\
        \mathbf{H_{t}(N_u)} \\
    \end{bmatrix} + \mathbf{W_f}.
\end{equation}

\renewcommand\footnoterule{}      
\begin{algorithm}[b!] 
\caption{Alternating minimisation for MU-MIMO fronthaul compression}\footnotetext{$\mathbf{x^{T}_{i}(m)} = [x_{m}(1) \hspace{1mm} x_{m}(2) \hspace{1mm} ... \hspace{1mm}  x_{m}(N_u)]$, where $x_{m}(u)$ is the $m^{th}$ diagonal element of $\mathbf{\hat{X}_{i}(u)}$, $\mathbf{y_{m}^{T}}$ is $m^{th}$ row of $\mathbf{Y_f}$, $\mathbf{B_m}$ is the $N_{u} \times N_{r}$ sub-matrix of $\mathbf{B}$ for $m$-th sub-carrier for the $N_u$ users.}
\label{alg:loop}
\begin{algorithmic}[1]
\State Input $\mathbf{Y_f}$
\State Define error tolerance $\epsilon$
\State Initialize diagonals of ${\mathbf{\hat{X}_{0}(u)}}$ to be any $N_u$ columns of $\mathbf{Y_f}$ for $u = 1,2,...,N_u$, $k = 0$, $\mathbf{B} = [\hspace{1mm}]$
\Repeat
 \State {${\mathbf{\hat{H}_{k+1}}}\gets \mathbf{{(\hat{X}_{k}}\diag(\mathbf{F_L))}^{\dagger}}\mathbf{Y_f}$}
 \State $\mathbf{B} \gets \diag(\mathbf{F_{L})\hat{H}_{k+1}}$
 \For {$m \gets 1$ to $N$}
    \State $\mathbf{x^{T}_{k+1}(m)} = \mathbf{y_{m}^{T} (B_m)^{\dagger}}$
 \EndFor
 \State $k \gets k+1$
\Until{$|| \mathbf{Y_f} - {\mathbf{\hat{X}_k}}\diag(\mathbf{{F_{L})\hat{H}_k}} ||_{F} / ||\mathbf{Y_f}||_{F} < \epsilon$}\\
\Return $\mathbf{\hat{X}_k}, \mathbf{\hat{H}_k}$
\end{algorithmic}
\end{algorithm}

Let $\mathbf{\hat{X}} = [\mathbf{\hat{X}(1)} \hspace{1mm}\mathbf{\hat{X}(2)} \hdots \mathbf{\hat{X}(N_u)}]$, where $\mathbf{\hat{X}(u)}$ are $N \times N$ diagonal matrices, and $\mathbf{\hat{H}}= [\mathbf{\hat{H}(1)^{T}}\hspace{1mm}\mathbf{\hat{H}(2)^{T}} \hdots \mathbf{\hat{H}(N_u)^{T}}]^T$
where $\mathbf{\hat{H}(u)}$ are size $L \times N_r$ matrices, for $u = 1,2,.., N_u$. We use the same iterative approach to find $\mathbf{\hat{X}}$ and $\mathbf{\hat{H}}$ that form the signal component of $\mathbf{Y_f}$ as was used in the SU-MIMO case. As outlined in Algorithm 2, at each iteration we solve the minimisation problems
\begin{equation}
\begin{aligned}
    \mathbf{\hat{H}_{k+1}} &= \argmin_{\mathbf{H}} || \mathbf{Y_f} - \mathbf{\hat{X}_k}\diag(\mathbf{{F_{L}){H}}} ||^{2}_{F},\\
    \mathbf{\hat{X}_{k+1}} &= \argmin_{\mathbf{X}} || \mathbf{Y_f} - \mathbf{X} \diag(\mathbf{F_{L})\hat{H}_{k+1}} ||^{2}_{F},
\end{aligned}
\end{equation}
alternately until we obtain the optimal solution \{$\mathbf{\hat{X}, \hat{H}}$\} to faithfully reconstruct $\mathbf{Y_f}$ within a pre-defined error tolerance $\epsilon$. Then we send the $N N_u$ samples corresponding to $\mathbf{\hat{X}}$ and the $L N_{r} N_{u}$ samples corresponding to $\mathbf{\hat{H}}$ from the RRH to the BBU, leading to a compression ratio of
\begin{equation}
    CR_{MU} = \frac{N N_r}{N_{u}(N + L N_{r})}.
\end{equation}

\subsection*{Recovery at BBU}
Upon receiving the samples corresponding to $\mathbf{\hat{X}}$ and $\mathbf{\hat{H}}$ at the BBU, we reconstruct $\mathbf{Y_f}$ as
\begin{equation}
    \mathbf{\hat{Y}_f} = \mathbf{\hat{X} F_{L} \mathbf{\hat{H}}},
\end{equation}
for SU-MIMO, and as
\begin{equation}
    \mathbf{\hat{Y}_f} = \mathbf{\hat{X} \diag(F_{L}) \mathbf{\hat{H}}},
\end{equation}
for MU-MIMO. Then the standard pilot-based channel estimation can be used at the BBU for data recovery. We use the estimated channel coefficients to apply either maximal ratio combining (MRC) in the single user case or zero-forcing (ZF) equalization in the multi-user case to $\mathbf{\hat{Y}_f}$  in order to recover the user data $\mathbf{X_f}$. 

\par{We do not directly use the $\mathbf{\hat{X}}$ received from RRH as the estimate of the user data $\mathbf{X_f}$ for the following two reasons: 1) The solution \{$\mathbf{\hat{X}, \hat{H}}$\} obtained in Algorithms 1 and 2 are unique only up to a scalar constant, and the product in (9) or (10) can converge to $\mathbf{Y_f}$ even when $\mathbf{\hat{X}}$ and $\mathbf{\hat{H}}$ do not individually converge to the actual $\mathbf{X_f}$ and $\mathbf{H_t}$. A detailed discussion on this is given in the Appendix. 2) In the case of network architectures like Cloud Radio Access Network (C-RAN), where several RRHs share a single BBU pool, the BBU can have knowledge of interferers which allows it to get a better estimate of the channel than the $\mathbf{\hat{H}}$ received from the RRH.}

\section{Results and Discussion}
We compare the performance of the proposed
compression method against PCA compression in \cite{Choi2016} applied in the frequency domain based on three criteria: Compression Ratio (CR), Symbol Error Rate (SER) and algorithm complexity.

\subsection{Comparison of achieved Compression Ratios}
For a SU-MIMO system, the CR of the proposed method is given by (3). The CR for PCA compression in \cite{Choi2016} is given by
\begin{equation}
    CR_{SU,PCA} = \frac{NN_r}{L(N+N_{r})}.
\end{equation}

\begin{table}[b!]
\centering
\begin{tabular}{ |p{1cm}||p{2cm}|p{2cm}|  }\hline
\makebox[3em]{Method}&\makebox[5.5em]{$N=1024$}&\makebox[5.5em]{$N=4096$}\\\hline\hline
\makebox[3em]{MD}&\makebox[5.5em]{36.6}&\makebox[5.5em]{53.9}\\\hline
\makebox[3em]{PCA}&\makebox[5.5em]{5.0}&\makebox[5.5em]{5.2}\\\hline
\end{tabular}
\caption{Compression Ratios for Matrix Decomposition (MD) and PCA methods for $N_{r}=64, L=12,N_{u}=1$}
\end{table}

\begin{table}[b!]
\centering
\begin{tabular}{ |p{1cm}||p{2cm}|p{2cm}|  }\hline
\makebox[3em]{Method}&\makebox[5.5em]{$N=1024$}&\makebox[5.5em]{$N=4096$}\\\hline\hline
\makebox[3em]{MD}&\makebox[5.5em]{9.2}&\makebox[5.5em]{13.5}\\\hline
\makebox[3em]{PCA}&\makebox[5.5em]{1.2}&\makebox[5.5em]{1.3}\\\hline
\end{tabular}
\caption{Compression Ratios for MD and PCA methods for $N_{r}=64, L=12,N_{u}=4$}
\end{table}

We observe from (3) and (11) that for large values of $N$, which is the case when $N$ is the OFDM symbol length for large bandwidths, having $N \gg \max \{L,N_r\}$ implies
\begin{equation*}
    \frac{CR_{SU}}{CR_{SU,PCA}} = \frac{LN+LN_{r}}{N+LN_{r}} \approx \frac{LN}{N},
\end{equation*}
which gives us 
\begin{equation}
    CR_{SU} \approx L \times CR_{SU,PCA}.
\end{equation}
For the MU-MIMO case, 
\begin{equation}
    CR_{MU,PCA} = \frac{NN_r}{LN_{u}(N+N_{r})}.
\end{equation}
Comparing this with (8) when $N \gg \max \{L,N_r,N_u\}$ gives us
\begin{equation*}
    \frac{CR_{MU}}{CR_{MU,PCA}} = \frac{LN_{u}N+LN_{u}N_{r}}{N_{u}N+LN_{u}N_{r}} \approx L.
\end{equation*}
The CRs for different values of $N,N_{r},L,N_{u}$ for both the methods are given in Tables I and II. Thus, we observe that the proposed compression method can give nearly an order of magnitude higher CRs than PCA compression in \cite{Choi2016}, \cite{aswathylakshmi2019qr}.

\subsection{Symbol Error Rate Performance}
We use Monte Carlo simulations to evaluate the uncoded SER of the proposed method, PCA compression and the uncompressed system. The simulation parameters are summarized in Table III. We use the exponential correlation model \cite{loyka2001} with correlation coefficient 0.7 to model the antenna correlation for a uniform linear array at the RRH. We use $N_{u} = 4$ users for the MU-MIMO case.

\begin{table}[b!]
\centering
\begin{tabular}{ |p{5cm}||p{2cm}|  }
 \hline
 Modulation scheme ($M$) & 64-QAM \\\hline
 No. of RRH antennas ($N_r$) & 64\\\hline
 FFT size ($N$) & 4096\\\hline
 Multi-path Channel length ($L$) & 12\\\hline
 Channel Model & TDLA30 \\\hline
\end{tabular}
\caption{Simulation Parameters}
\end{table}

\begin{figure*}[t!]
  \centering
  \mbox{
    \subfigure[\label{subfigure label}]{\centering\includegraphics[width=.45\linewidth]{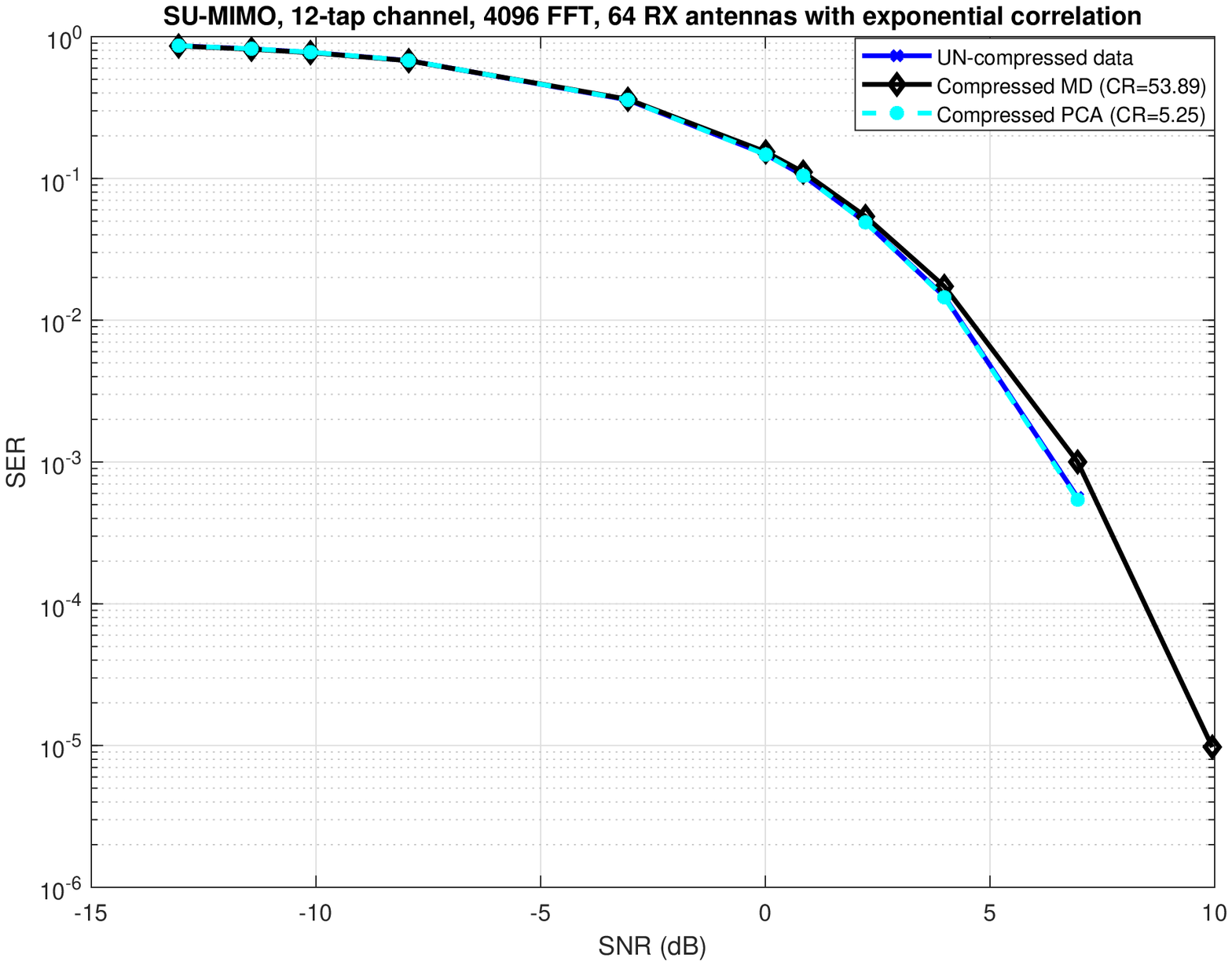}}\quad
    \subfigure[\label{subfigure label}]{\centering\includegraphics[width=.45\linewidth]{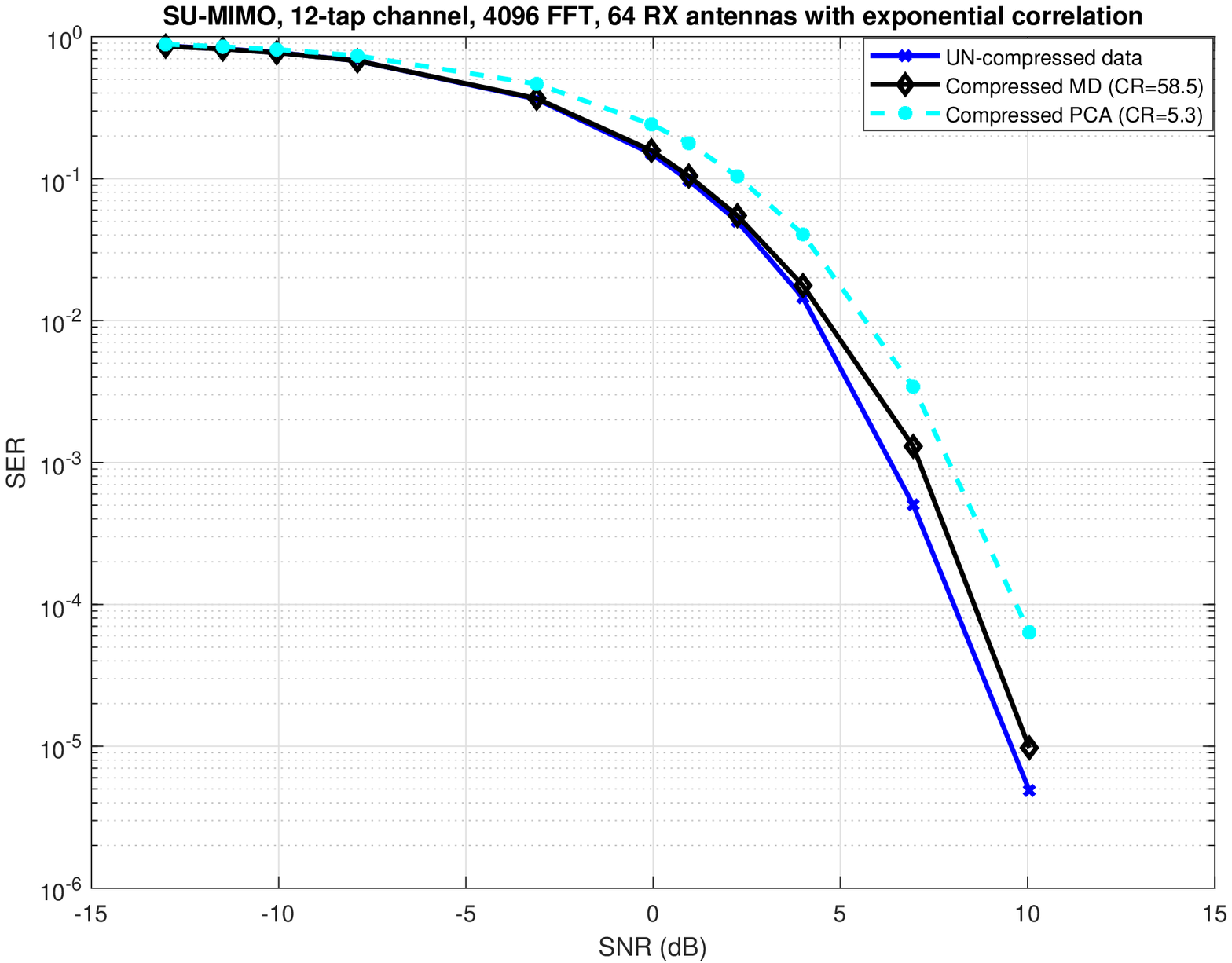}}\quad
 }
  \caption{Uncoded SERs of the proposed method (after 10 iterations of Algorithm 1) and PCA compression for single user in the system with (a) all $N_r=64$ antennas used during MRC (b) only $N_{r}/2=32$ antennas used during MRC. SER for the uncompressed system for all cases are also plotted as baseline. Proposed method matches the SER of uncompressed system in both cases while providing a CR above 50, whereas SER for PCA compression degrades in case (b) in spite of offering a CR of only about 5.}
  \label{main figure label}
\end{figure*}

\begin{figure*}[t!]
  \centering
  \mbox{
    \subfigure[\label{subfigure label}]{\centering\includegraphics[width=.45\linewidth]{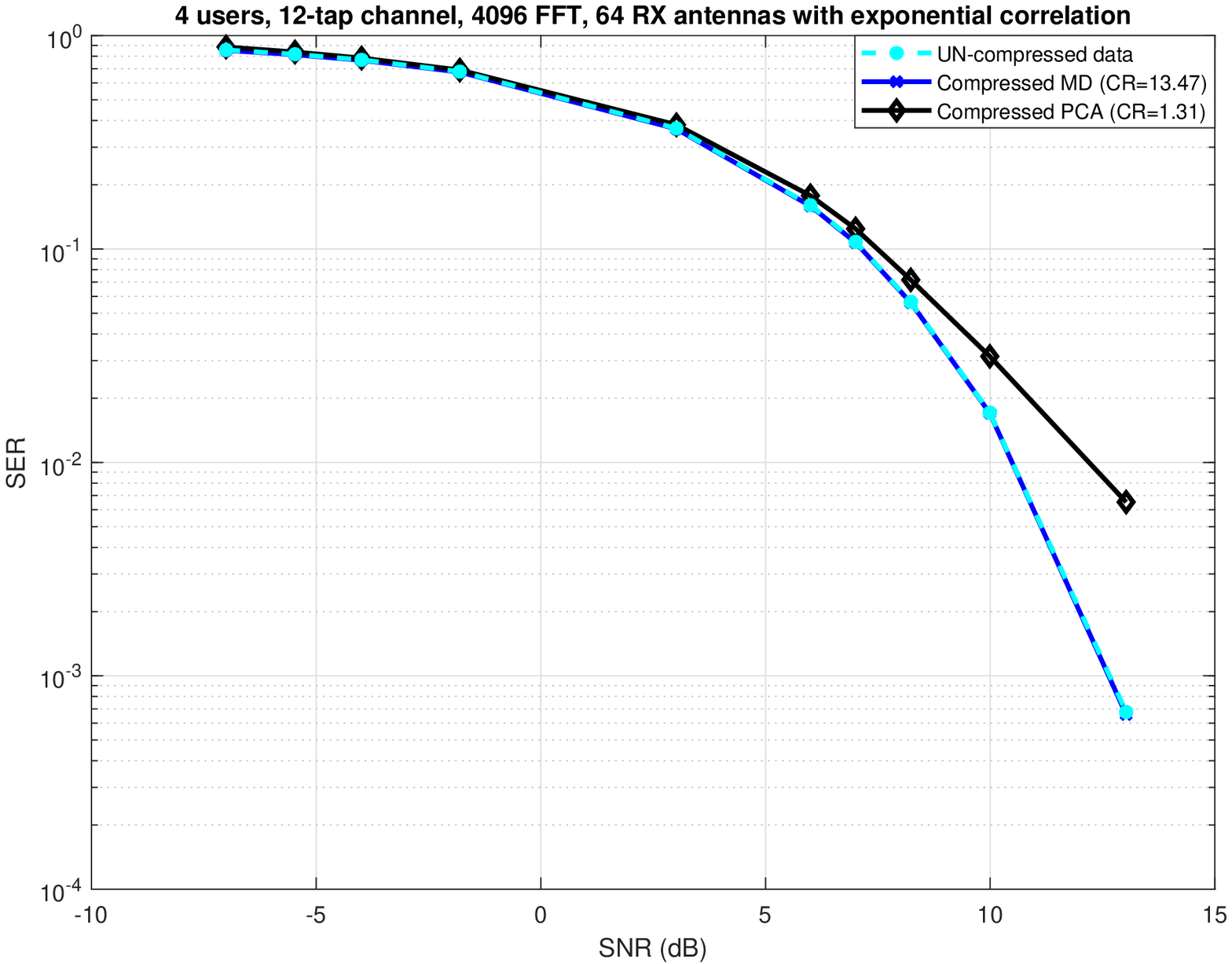}}\quad
    \subfigure[\label{subfigure label}]{\centering\includegraphics[width=.45\linewidth]{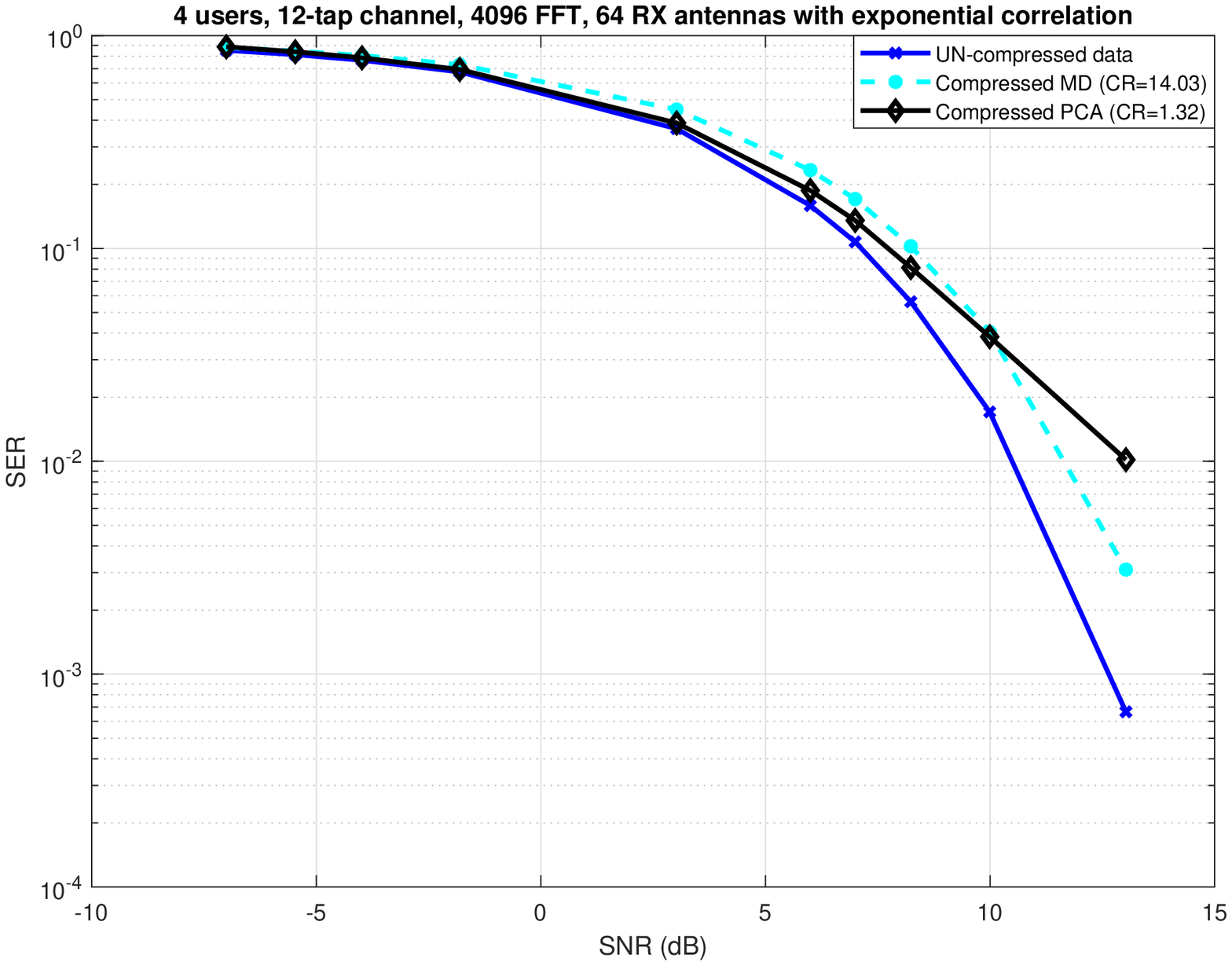}}\quad
}
  \caption{Uncoded SERs of the proposed method (after 10 iterations of Algorithm 2), PCA compression and no compression cases for 4 users in the system with (a) all $N_r=64$ antennas used during ZF (b) only $LN_{u}=48$ antennas used during ZF. In both cases, the proposed method offers CR of nearly 14 while matching the SER of PCA compression and uncompressed system at low SNRs. The SER of the proposed method degrades at higher SNRs due to loss of diversity.}
  \label{main figure label}
\end{figure*}

\par{Fig. 2 shows the uncoded SERs of the proposed method and PCA compression for a single user in the system. MRC using pilot-based channel estimates is applied at the BBU. In Fig. 2(a), all $N_r=64$ antennas are used during MRC. We observe that the proposed method performs as well as PCA and the non-compression case, even while providing a CR of 53.9. This is more than $10\times$ the CR provided by PCA. In Fig. 2(b), we halve the number of antennas used at the MRC stage for the compressed systems to increase the compression. We observe that the proposed method, with an improved CR of 58.5, still matches the performance of the uncompressed system whereas the performance of PCA compression degrades even though it provides a CR of only 5.3, which is $11\times$ lower than our method.}


\par{Fig. 3 shows the SER performance for the multi-user case with $N_{u}=4$ users in the system. In Fig. 3(a), data from all $N_r=64$ antennas are used for demodulation at BBU. In Fig. 3(b), data from only $LN_{u}=48$ antennas are used at the BBU to improve the CR. In both cases, the proposed method matches the SER of PCA and no compression at low SNRs, while providing a CR of approximately 14, nearly $10\times$ that of PCA. We observe loss of diversity gain at high SNRs for the proposed method in both cases and for PCA in case (b).}

\subsection{Algorithm Complexity}
We now analyze the time complexity of the proposed compression method. We observe from Algorithm 1 that the most computationally intensive operation is step 5, where we calculate 
\begin{equation}
    \mathbf{\hat{H}_{k+1}} = \mathbf{{(\hat{X}_{k}F_L)}^{\dagger}}\mathbf{Y_f} = (\mathbf{F_{L}^{H}\hat{X}_{k}^{H}\hat{X}_{k}F_{L}})^{-1}\mathbf{F_{L}^{H}\hat{X}_{k}^{H}Y_f}.
\end{equation}
The matrix pseudo-inverse in (14) is usually calculated via SVD, which has a time complexity of $\mathcal{O}(LN^{2})$ since $\mathbf{\hat{X}_{k}F_L}$ has dimension $N\times L$. However, if we expand the matrix pseudo-inverse, then the computations $(\mathbf{F_{L}^{H}\hat{X}_{k}^{H}\hat{X}_{k}F_{L}})^{-1}$ and $\mathbf{F_{L}^{H}\hat{X}_{k}^{H}Y_f}$ can be carried out in parallel to save time. Thus, we split (14) into three steps:
\begin{enumerate}
    \item Compute and store $\mathbf{F_{L}^{H}\hat{X}_{k}^{H}}$, which has $LN$ multiplications, same as a matrix-vector product because $\mathbf{\hat{X}_{k}^{H}}$ is diagonal. 
    \item Compute $(\mathbf{F_{L}^{H}\hat{X}_{k}^{H}\hat{X}_{k}F_{L}})^{-1}$ and $\mathbf{F_{L}^{H}\hat{X}_{k}^{H}Y_f}$ in parallel. $\mathbf{U} = \mathbf{F_{L}^{H}\hat{X}_{k}^{H}\hat{X}_{k}F_{L}}$ requires $L^{2}N$ multiplications and finding the inverse of $\mathbf{U}$, an $L\times L$ matrix, has complexity $\mathcal{O}(L^{3})$. Meanwhile, $\mathbf{V} = \mathbf{F_{L}^{H}\hat{X}_{k}^{H}Y_f}$ requires $LNN_r$ multiplications. Since $L<N_r$, this step has a time complexity of $\mathcal{O}(LNN_{r})$.
    \item Compute the product of $\mathbf{U}_{L\times L}^{-1}$ and $\mathbf{V}_{L\times N_r}$, which requires $L^{2}N_r$ multiplications.
\end{enumerate} 
Thus, the approximate time complexity of one iteration in Algorithm 1 is $\mathcal{O}(LNN_{r})$, which is comparable to the time complexity of $\mathcal{O}(N_{r}^3)$ of the SVD step in PCA compression \cite{Choi2016}. However, we require multiple iterations for Algorithm 1 to converge to $\mathbf{Y_f}$, which makes it computationally more intensive than PCA compression in \cite{Choi2016}. 

\section{Conclusion}
Fronthaul capacity is a major bottleneck in the implementation of 5G OFDM massive MIMO networks. In this work, we proposed a method of fronthaul compression for the uplink that exploits the convolution structure of the received data. We used an iterative alternating minimisation approach at the RRH to approximate the received signal as the product of a diagonal user data matrix and a low rank channel response matrix, allowing the received signal to be reconstructed at the BBU using fewer samples. The method can be tailored to both single-user and multi-user MIMO systems, and link level simulations show that it provides the same symbol error rates as an uncompressed system. It can offer nearly an order of magnitude higher compression ratios than the existing methods.

\section{Acknowledgements}
This work was supported by the Ministry of Electronics and Information Technology, India, through the 5G project, and by ANSYS SOFTWARE PVT. LTD. through their doctoral fellowship to Aswathylakshmi P.

\bibliographystyle{IEEEtrans}

\bibliography{ref}

\appendix
\section*{Uniqueness of the solution}
The optimal solution to (4), \{$\mathbf{\hat{X}, \hat{H}}$\} obtained via Algorithm 1 is unique up to a scalar constant. We prove this in the following lemma.

\begin{lemma}
    If \{$\mathbf{\hat{X}, \hat{H}}$\} and \{$\mathbf{\tilde{X}, \tilde{H}}$\} are two solutions to (4), then they are related to each other by the scalar transform
    \begin{equation}
        \mathbf{\tilde{X}} = (\tfrac{1}{\lambda})\mathbf{\hat{X}} \hspace{2mm}\text{and}\hspace{2mm} \mathbf{\tilde{H}} = \lambda\mathbf{\hat{H}}.
    \end{equation}
\end{lemma}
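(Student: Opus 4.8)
The plan is to first reduce the pair of factorisations to a single diagonal ``ambiguity'' matrix, and then to exploit the Vandermonde structure of $\mathbf{F_L}$ to pin that matrix down to a scalar. Since $\{\mathbf{\hat X},\mathbf{\hat H}\}$ and $\{\mathbf{\tilde X},\mathbf{\tilde H}\}$ both attain the global minimum of (4), each reproduces the same best approximation of $\mathbf{Y_f}$ within the model, so $\mathbf{\hat X}\mathbf{F_L}\mathbf{\hat H}=\mathbf{\tilde X}\mathbf{F_L}\mathbf{\tilde H}$. Writing $\mathbf{\hat B}=\mathbf{F_L}\mathbf{\hat H}$ and $\mathbf{\tilde B}=\mathbf{F_L}\mathbf{\tilde H}$ and reading this identity one row at a time, the $m$-th rows obey $\hat x_m\,\mathbf{\hat b}_m=\tilde x_m\,\mathbf{\tilde b}_m$, where $\hat x_m,\tilde x_m$ are the $m$-th diagonal entries of $\mathbf{\hat X},\mathbf{\tilde X}$ and $\mathbf{\hat b}_m,\mathbf{\tilde b}_m$ are the $m$-th rows of $\mathbf{\hat B},\mathbf{\tilde B}$.

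First I would dispose of the degenerate cases: for a genuine factorisation the diagonal user symbols are nonzero and $\mathbf{\hat H}$ has full row rank $L$ (it is a channel matrix with $N_r\ge L$ essentially independent taps). Hence $\tilde x_m\neq0$ as well, and I may set $\mu_m=\hat x_m/\tilde x_m$, giving $\mathbf{\tilde b}_m=\mu_m\,\mathbf{\hat b}_m$ for every $m$, i.e. $\mathbf{\tilde B}=\mathbf{M}\,\mathbf{\hat B}$ with $\mathbf{M}=\diag(\mu_1,\dots,\mu_N)$. Equivalently $\mathbf{F_L}\mathbf{\tilde H}=\mathbf{M}\,\mathbf{F_L}\mathbf{\hat H}$. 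Because $\mathbf{\hat H}$ has full row rank, the columns of $\mathbf{\hat B}$ span the column space $\mathrm{col}(\mathbf{F_L})$, so the left-hand side shows that $\mathbf{M}$ maps this subspace into itself. The whole lemma now rests on a single claim: the only diagonal matrices that leave $\mathrm{col}(\mathbf{F_L})$ invariant are the scalar multiples of the identity.

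This invariance claim is the main obstacle, and it is exactly where the DFT structure is indispensable. Identifying $\mathbf{F_L}$ as the Vandermonde matrix built from the $N$-th roots of unity, its column space is the set of vectors $(p(\omega^0),\dots,p(\omega^{N-1}))$ obtained by sampling polynomials $p$ of degree at most $L-1$ at the points $\omega^m$, where $\omega$ is the primitive root defining the DFT. Coordinatewise multiplication by $\mathbf{M}$ corresponds to multiplying by the polynomial that interpolates the $\mu_m$, taken modulo $z^N-1$. Applying $\mathbf{M}$ to the constant vector (the first column of $\mathbf{F_L}$) forces the interpolant of $\mu$ itself to have degree below $L$; applying $\mathbf{M}$ to the highest column, the sampling of $z^{L-1}$, then produces a polynomial of degree up to $2L-2$, and since $L\ll N$ guarantees $2L-2<N$ there is no wrap-around modulo $z^N-1$, so remaining in degree ${<}\,L$ forces every nonconstant coefficient of the interpolant to vanish. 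Thus $\mu_m$ equals a common constant $\lambda$ and $\mathbf{M}=\lambda\mathbf{I}$.

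It then remains only to unwind the definitions. From $\mathbf{M}=\lambda\mathbf{I}$ the relation $\mathbf{F_L}\mathbf{\tilde H}=\lambda\,\mathbf{F_L}\mathbf{\hat H}$ together with the full column rank of $\mathbf{F_L}$ gives $\mathbf{\tilde H}=\lambda\mathbf{\hat H}$, while $\mu_m=\lambda$ gives $\tilde x_m=\hat x_m/\lambda$ for all $m$, that is $\mathbf{\tilde X}=(1/\lambda)\mathbf{\hat X}$, which is exactly the claimed scalar transform. I expect the routine parts to be the rank and nonvanishing bookkeeping, and the only genuine content to be the polynomial-degree argument showing that a diagonal matrix fixing the band-limited subspace $\mathrm{col}(\mathbf{F_L})$ must be scalar.
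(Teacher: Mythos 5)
Your proof is correct, and while the opening reduction is the same as the paper's (both of you boil the problem down to a diagonal ambiguity matrix $\mathbf{M}$ satisfying $\mathbf{M}\,\mathbf{F_L}\mathbf{\hat{H}} = \mathbf{F_L}\mathbf{\tilde{H}}$, the paper writing this as $\mathbf{F_L}\mathbf{H^{*}} = (\mathbf{X^{*}})^{-1}\mathbf{F_L}$), the decisive step is argued quite differently. The paper reads its identity row-by-row, concludes that the rows of $\mathbf{F_L}$ are left eigenvectors of the $L\times L$ matrix $\mathbf{H^{*}}$, expands $\mathbf{f_{L+1}}$ in the first $L$ rows, and forces all eigenvalues to coincide; note this tacitly needs every coefficient $a_i$ in that expansion to be nonzero, which holds because any $L$ rows of the Vandermonde matrix $\mathbf{F_L}$ are independent --- so the paper also leans on the DFT structure, just implicitly. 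You instead prove that a diagonal matrix leaving $\mathrm{col}(\mathbf{F_L})$ invariant must be scalar, via polynomial interpolation and degree counting. Your route makes the role of the Fourier/Vandermonde structure explicit and is conceptually clean (``pointwise multiplication cannot preserve a band-limited subspace unless it is constant''), but it needs the no-wraparound condition $2L-2<N$, which is automatic in the paper's regime $L\ll N$ though not literally implied by $L<\min\{N,N_r\}$; the paper's eigenvector argument avoids that condition and generalizes to any tall matrix whose every $L$ rows are independent. You are also more careful than the paper on the bookkeeping: you explicitly invoke the nonvanishing of the symbols and the full row rank of $\mathbf{\hat{H}}$, whereas the paper simply posits $\mathbf{\tilde{H}} = \mathbf{H^{*}}\mathbf{\hat{H}}$ without justifying that such an $\mathbf{H^{*}}$ exists.
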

\begin{proof}
    If \{$\mathbf{\hat{X}, \hat{H}}$\} and \{$\mathbf{\tilde{X}, \tilde{H}}$\} are both solutions to (4), then
    \begin{equation}
        \mathbf{\hat{X}F_{L}\hat{H}} = \mathbf{\tilde{X}F_{L}\tilde{H}}.
    \end{equation}
    Let $\mathbf{\tilde{X}} = \mathbf{\hat{X}X^{*}}$ and $\mathbf{\tilde{H}} = \mathbf{H^{*}\hat{H}}$, where $\mathbf{X^*}$ is an $N\times N$ matrix and $\mathbf{H^*}$ is an $L\times L$ matrix. Then, we need  
    \begin{equation}
        \mathbf{X^{*}F_{L}H^{*}} = \mathbf{F_L}
    \end{equation}
    to satisfy (16). If $\mathbf{X^*}$ is a rank-$N$ diagonal matrix, then $\mathbf{F_{L}H^*} = \mathbf{(X^{*})^{-1}F_L}$.
    Using $\mathbf{f_{m}}$ to denote the $m^{th}$ row of $\mathbf{F_L}$ and $\lambda_{m}$ to denote the $m^{th}$ diagonal element of $\mathbf{(X^{*})^{-1}}$, we can rewrite the above as
    \begin{equation}
        \mathbf{f_{m}H^*} = \lambda_{m}\mathbf{f_{m}}, \hspace{2mm} m = 1,2,...,N.
    \end{equation}
    Thus, the rows of $\mathbf{F_L}$ form the left eigen vectors of the matrix $\mathbf{H^*}$. $\mathbf{F_L}$ is a rank-$L$ matrix, therefore we can express $\mathbf{f_{L+1}}$ as
    \begin{equation}
        \mathbf{f_{L+1}} = \sum_{i=1}^{L}a_{i}\mathbf{f_i}.
    \end{equation}
    Combining (18) and (19), we have
    \begin{equation*}
        \mathbf{f_{L+1}H^*} = \sum_{i=1}^{L}a_{i}\lambda_{i}\mathbf{f_{i}} = \lambda_{L+1}\sum_{i=1}^{L}a_{i}\mathbf{f_{i}},
    \end{equation*}
    which can hold true only when all $\lambda_{i}$'s are equal. This gives us 
    \begin{equation}
        \mathbf{X^*} = (\tfrac{1}{\lambda})\mathbf{I_{N}} \hspace{1mm}\text{and}\hspace{1mm} \mathbf{H^*} = \lambda\mathbf{I_{L}},
    \end{equation}
    where $\lambda_{i} = \lambda$, for $i=1,2,...,N$ and $\mathbf{I_{K}}$ denotes identity matrix of dimension $K$. 

    \par{We note that \{$\mathbf{\hat{X}X^{*}, \hat{H}H^{*}}$\} is also a solution to (4) where $\mathbf{H^{*}} = \lambda \mathbf{I_{N_r}}$. To see this, use $\mathbf{A} = \mathbf{F_{L}\hat{H}}$} in (16). Then the condition to be satisfied, $\mathbf{X^{*}AH^{*}} = \mathbf{A}$ is of the form in (17) and the result follows. 
\end{proof}


\end{document}